\def\BibTeX{{\rm B\kern-.05em{\sc i\kern-.025em b}\kern-.08em
    T\kern-.1667em\lower.7ex\hbox{E}\kern-.125emX}}
\newcommand{\vect}[0]{\text{vect}}
\newcommand{\rank}[0]{\text{rank}}
\newcommand{\diag}[0]{\text{diag}}
\newcommand{\bgamma}[0]{\boldsymbol{\gamma}}
\newcommand{\bdelta}[0]{\boldsymbol{\delta}}
\newcommand{\btheta}[0]{\boldsymbol{\theta}}
\newcommand{\bSigma}[0]{\boldsymbol{\Sigma}}
\newcommand{\bA}[0]{\mathbf{A}}
\newcommand{\bb}[0]{\mathbf{b}}
\newcommand{\bB}[0]{\mathbf{B}}
\newcommand{\bC}[0]{\mathbf{C}}
\newcommand{\bD}[0]{\mathbf{D}}
\newcommand{\bE}[0]{\mathbf{E}}
\newcommand{\bg}[0]{\mathbf{g}}
\newcommand{\bG}[0]{\mathbf{G}}
\newcommand{\bH}[0]{\mathbf{H}}
\newcommand{\bI}[0]{\mathbf{I}}
\newcommand{\bJ}[0]{\mathbf{J}}
\newcommand{\bK}[0]{\mathbf{K}}
\newcommand{\bL}[0]{\mathbf{L}}
\newcommand{\bM}[0]{\mathbf{M}}
\newcommand{\bn}[0]{\mathbf{n}}
\newcommand{\bP}[0]{\mathbf{P}}
\newcommand{\br}[0]{\mathbf{r}}
\newcommand{\bR}[0]{\mathbf{R}}
\newcommand{\bs}[0]{\mathbf{s}}
\newcommand{\bU}[0]{\mathbf{U}}
\newcommand{\bx}[0]{\mathbf{x}}
\newcommand{\bX}[0]{\mathbf{X}}
\newcommand{\by}[0]{\mathbf{y}}
\newcommand{\bz}[0]{\mathbf{z}}
\newcommand{\bZ}[0]{\mathbf{Z}}
\newcommand{\bDt}[0]{\mathbf{\tilde{D}}}
\newcommand{\bgt}[0]{\mathbf{\tilde{g}}}
\newcommand{\bGt}[0]{\mathbf{\tilde{G}}}
\newcommand{\bxt}[0]{\mathbf{\tilde{x}}}
\newcommand{\byt}[0]{\mathbf{\tilde{y}}}
\newcommand{\bzt}[0]{\mathbf{\tilde{z}}}
\newcommand{\bSigmat}[0]{\boldsymbol{\tilde{\Sigma}}}
\newcommand{\bdeltat}[0]{\boldsymbol{\tilde{\delta}}}
\newcommand{\brh}[0]{\mathbf{\hat{r}}}
\newcommand{\bRh}[0]{\mathbf{\hat{R}}}
\newcommand{\bthetah}[0]{\boldsymbol{\hat{\theta}}}
\newcommand{\zeros}[0]{\mathbf{0}}
\newcommand{\ones}[0]{\mathbf{1}}
\newcommand{\MCE}[0]{\mathcal{E}}
\newcommand{\beq}{\begin{equation}}
\newcommand{\eeq}{\end{equation}}
\newcommand{\bea}{\begin{array}}
\newcommand{\ena}{\end{array}}
\newcommand{\DL}{\begin{dashlist}}
\newcommand{\DLE}{\end{dashlist}}
\newtheorem{lemma}{Lemma}
\begin{document}

\title{Efficient Calibration of Radio Interferometers Using Block LDU Decomposition  \\
\thanks{$1$ A. Mouri Sardarabadi (ammsa@astro.rug.nl) and L.V.E. Koopmans are affiliated with Kapteyn Astronomical Institute, University of Groningen, The Netherlands. $2$ Alle-Jan van der Veen is affiliated with Delft University of Technology, Delft, The Netherlands}
}

\author{Ahmad Mouri Sardarabadi$^1$, Alle-Jan van der Veen$^2$ and L\'eon V. E. Koopmans$^1$
}

\maketitle

\begin{abstract}
Having an accurate calibration method is crucial for any scientific research done by a radio telescope. The next generation radio telescopes such as the Square Kilometre Array (SKA) will have a large number of receivers which will produce exabytes of data per day. In this paper we propose new direction-dependent and independent calibration algorithms that, while requiring much less storage during calibration, converge very fast. The calibration problem can be formulated as a non-linear least square optimization problem. We show that combining a block-LDU decomposition with Gauss-Newton iterations produces systems of equations with convergent matrices. This allows significant reduction in complexity per iteration and very fast converging algorithms. We also discuss extensions to direction-dependent calibration. The proposed algorithms are evaluated using simulations.
\end{abstract}

\begin{IEEEkeywords}
Calibration, Radio Astronomy, Non-Linear Optimization, Covariance Matching
\end{IEEEkeywords}

\section{Introduction}

One of the key challenges for current and future radio--telescopes, such as LOFAR (LOw Frequency ARray) \cite{refIdLOFAR} and SKA (Square Kilometre Array) \cite{aj05skabook}, is the accurate calibration of the instrument with reasonable computational complexity. Modern radio telescopes consist of many receivers which can be large dishes or sub-arrays beamformed into a single element. The calibration problem for radio interferometers has already been addressed by several authors \cite{aj02calibration,wijnholds2009,kazemi01102013,mouri2014}. During a calibration cycle we use our current knowledge of the radio sources (fore example known from previous observations), to find the gains of the receivers. However, because a typical interferometer has a direction-dependent behavior, we need to solve these gains for different directions \cite{yatawatta2015distributed,vandertol2007}.  In this paper we assume to have access to an accurate model for the sources and we are interested in developing computationally efficient algorithms that scale well with multi-channel observations.

Based on the resolution of the instrument, in order to avoid source smearing, the observations are divided into small snapshots (order of seconds). However, in order to study very weak sources we need to observe for a very long time (e.g. hundreds of hours). This, combined with a large number of channels (several hundreds), produces a substantial volume of data that needs to be processed. Also, because calibration is a non-linear and non-convex problem, iterative and alternating approaches usually form the basis for a practical solution \cite{wijnholds2009,yatawatta2015distributed}.

In this paper we use the Khatri--Rao structure of the matrices involved in data model to develop an efficient direction independent gain calibration algorithm. We then use this method as a building block for a direction dependent calibration algorithm. Additionally, for very large problems we propose a conjugate gradient based algorithm and use simulation to evaluate the performance of these methods.

\section{Data Model}
In this section we introduce the covariance model for the data. We assume to have access to $P$ (single polarization or unpolarized) receivers which are exposed to $N_s$ (compact/point) sources. We assume that sources can be grouped into $Q$ clusters which are affected by the same direction-dependent gain similar to the model presented in \cite{kazemi01102013}. We stack the voltage output of each receiver in a vector denoted by $\by$ and assume that narrow-band assumptions hold. This allows us to model the sampled output of the array as
\begin{equation*}
\by[n] = \sum_{q=1}^Q \bG_q \bs_q[n] + \bn[n] 
\end{equation*}
where $\bs_q[n]$ represents the total signal from the $q$th cluster which includes the array response, $\bG_q = \diag(\bg_q)$ is the common gain for the $q$th cluster and $\bn[n]$ is the noise of each receiver. The covariance matrix for this model is given by $\MCE\{\by\by^H\}$. However, we assume that some of the elements of this matrix are contaminated and/or are removed. We use a masking matrix $\bM$ containing zeros and ones to capture this missing data in the model. We also assume that the gains are stable over several ``snapshots" in both time and frequency. We assume to have $K$ frequency channels with $T$ snapshots each. Including the masking matrix we get the following covariance model for each snapshot 
\begin{equation}
\label{eq:data_model}
\bR_{t,k} =\bM \odot \MCE\{\by_{t,k}\by_{t,k}^H\} = \bM \odot \sum_{q=1}^Q \bG_q \bSigma_{q,t,k} \bG_q^H,
\end{equation}
where $t =1, \dots, T$, $k=1, \dots, K$, $\odot$ is the element-wise or Hadamard product, ${}^H$ is the Hermitian transpose and $\bSigma_{q,t,k} = \MCE\{\bs_{q,t,k}\bs_{q,t,k}^H\}$ is the covariance of the $q$th cluster or the ``predicted sky-model", which is assumed to be known. We also assume that $\bR_{\bn} = \MCE\{\bn\bn^H\}$ is diagonal and is always removed as a result of applying the mask matrix, $\bM$.

During the measurements, a noisy estimate of $\bR_{t,k}$ is made using the output of the receivers. This estimate is denoted as a sample covariance matrix or sampled visibilities and is given by
\begin{equation}
\bRh_{t,k} = \bM \odot \sum_{n=1}^N \by_{t,k}[n]\by_{t,k}[n]^H,
\end{equation}
where $N$ is the number of (voltage) samples in a single snapshot.

For the rest of this paper we stack the covariance (visibility) model and the data, respectively, into vectors
\begin{equation}
\br(\btheta) = \begin{bmatrix}
\vect(\bR_{1,1}) \\
\vect(\bR_{2,1}) \\
\vdots\\
\vect(\bR_{T,K})
\end{bmatrix}, \brh = \begin{bmatrix}
\vect(\bRh_{1,1}) \\
\vect(\bRh_{2,1}) \\
\vdots\\
\vect(\bRh_{T,K})
\end{bmatrix}
\end{equation}
where $\vect(.)$ produces a vector from the argument matrix by stacking its columns and
\[
\btheta = \begin{bmatrix}
\bg_1^T& \bg^H_1 & \dots & \bg^T_Q & \bg^H_Q
\end{bmatrix}^T
\]
is the ``augmented" vector of variables. The term augmented means that a complex variable and its conjugate are used as separate variables instead of the real and imaginary part of the complex variables.

Using this data model we want to estimate the gains for each direction.
\section{Direction Independent Algorithm}
\label{sec:DIC}
In this section we discuss the case where $Q=1$. In this case the entire available sky-model is used and as a result the gain solutions are assumed direction independent. This case forms the basis for the direction-dependent calibration, which is discussed in the next section.

We use the least squares cost function to find an estimate for the gains:
\begin{equation}
\bthetah = \arg\min_{\btheta} \|\brh - \br(\btheta)\|_2^2
\end{equation}
where $\|.\|_2$ is the $l_2$ norm of a vector.
Because of the non-linear and non-convex nature of this problem we use a Newton-based iterative method known as the Gauss-Newton algorithm. The updates for this algorithm are given by
\begin{equation}
\label{eq:GNupdate}
\bthetah^{(i+1)} = \bthetah^{(i)} + \mu^{(i)} \bdelta
\end{equation}
where the GN direction of descent $\bdelta$ is given by the solution of \cite{Frandsen2004}
\begin{equation}
\label{eq:GNdod}
\bJ^H\bJ \bdelta = \bJ^H\left [\brh - \br(\btheta) \right ]
\end{equation}
where
\begin{equation}
	\bJ = \frac{\partial \br(\btheta)}{\partial \btheta^T} = \begin{bmatrix}
		\bJ_{1,1}^T&\dots&\bJ_{T,K}^T
	\end{bmatrix}^T,
\end{equation}
\begin{equation}
\bJ_{t,k} = \bP\begin{bmatrix}
\bG^*\bSigma_{t,k}^T \circ \bI_P & \bI_P \circ \bG\bSigma_{t,k}
\end{bmatrix},
\end{equation}
with $\circ$ the Khatri-Rao product, ${}^*$ the complex conjugate and $\bP = \diag(\vect(\bM))$ a projection matrix corresponding to the mask matrix $\bM$.
There exists a phase ambiguity for the solutions, i.e. if $\bg$ is a solution so is \mbox{$\bg' = e^{i\phi}\bg$} for any real $\phi$. We call the problem identifiable if $\rank(\bJ)= 2P - 1$ where the deficiency by 1 is the result of the phase ambiguity. In this case a basis for the null space of $\bJ$ is given by 
\begin{equation}
\label{eq:null_J}
\bz = [\bg^T , -\bg^H]^T.
\end{equation}
Because \mbox{$\br(\btheta) = 1/2 \bJ\btheta$} we have \mbox{$\bJ^H\bJ (\bdelta + 1/2 \btheta) = \bJ^H\brh$} which combined with the fact that $\btheta^H\bz = \zeros$ and hence $\btheta$ is in the row space of $\bJ$, leads to
\begin{equation}
\bthetah^{(i+1)} = \left (1 - \frac{\mu^{(i)}}{2} \right )\bthetah^{(i)} + \mu^{(i)} \bdeltat
\end{equation}
which is equivalent to \eqref{eq:GNupdate} for $\bdeltat$ satisfying
\begin{equation}
\label{eq:DoD_system}
\bJ^H\bJ \bdeltat = \bJ^H\brh.
\end{equation}
With this change of variables for the direction of descent, we remove the necessity to update the model, $\br(\btheta)$. However, since $\bJ$ depends on $\btheta$, this is only beneficial if we can calculate operations involving $\bJ$ and $\bJ^H$ sufficiently fast.  
Calculating the models $\bSigma_{t,k}$ which are needed for calculating $\bJ$ is very expensive and we would like to pre-calculate these matrices only once. However, because $TK$ is large, storing all of these model matrices should also be avoided. The rest of this section focuses on solving \eqref{eq:DoD_system}, while avoiding storage of the sky-models $\bSigma_{t,k}$.

For square matrices $\bA$ and $\bB$ we have \mbox{$\bI \circ (\bA\odot\bB) = \diag(\vect(\bB)) (\bI \circ \bA)$} and \mbox{$(\bB^T\odot\bA) \circ \bI = \diag(\vect(\bB)) (\bA \circ \bI)$} . Using these relations we have 
\[
\bJ_{t,k} = \begin{bmatrix}
\bG^*(\bM \odot \bSigma_{t,k})^T \circ \bI_P & \bI_P \circ \bG(\bM \odot \bSigma_{t,k})
\end{bmatrix}.
\]
Combining these results with \mbox{$\bJ^H\bJ = \sum_k\sum_t \bJ_{t,k}^H\bJ_{t,k}$} and \mbox{$\bJ^H\brh = \sum_k\sum_t \bJ_{t,k}^H\brh_{t,k}$}
we have
\begin{equation}
\bJ^H\bJ = \begin{bmatrix}
\diag\left [ \bH ~(\bg \odot \bg^*) \right ] & \bG \bH \bG \\
\bG^* \bH \bG^* & \diag\left [ \bH ~(\bg \odot \bg^*) \right ]
\end{bmatrix}
\end{equation}
and 
\begin{equation}
\bJ^H\brh = \begin{bmatrix}
\bE \bg\\
\bE^* \bg^*
\end{bmatrix}
\end{equation}
where
\begin{align}
\label{eq:di_H}
\bH & = \bM \odot \sum_k\sum_t \bSigma_{t,k}^T \odot \bSigma_{t,k},\\
\label{eq:di_E}
\bE &= \bM \odot \sum_k\sum_t \bSigma_{t,k}^T \odot \bRh_{t,k}.
\end{align}
We only need to calculate the real symmetric matrix $\bH$ and the Hermitian matrix $\bE$ once in order to solve $\bdeltat$ and $\bthetah$. This means that we can discard $\bSigma_{t,k}$ during the calculation of $\bH$ and $\bE$. This allows for a dramatic reduction of the required storage and also I/O overhead during the calibration.

The remaining problem is the actual solution of \eqref{eq:DoD_system} which we address now. We would like to point out that this system of equations is normal and consistent. This allows for the solution to be obtained from
\[
\bdeltat = \bX\bJ^H\brh,
\]
where $\bX$ is any generalized inverse of $\bJ^H\bJ$ (i.e. \mbox{$\bJ^H\bJ\bX\bJ^H\bJ = \bJ^H\bJ$}).
However, not all $\bdeltat$ found in this way will have the augmented form \mbox{$[\by^T, \by^H]^H$}, which is required for a valid direction of descent. We use the following lemma to find a simple solution for this problem.
\begin{lemma}
	Let $\bK$ be a permutation matrix of the form
	\[
	\bK = \begin{bmatrix}
	\zeros & \bI_M \\
	\bI_M & \zeros
	\end{bmatrix},
	\]
	and $\bA$ be any square matrix of size $2M \times 2M$ such that \mbox{$\bA^* = \bK \bA \bK$}. Let $\bA^g$ be a generalized inverse of $\bA$ (i.e. \mbox{ $\bA\bA^g\bA = \bA$}) then \mbox{$\bX = \frac{1}{2} (\bA^g  + \bK (\bA^g)^* \bK)$} is also a generalized inverse of $\bA$.
\end{lemma}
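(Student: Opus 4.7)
The plan is to verify the defining property of a generalized inverse, namely $\bA\bX\bA = \bA$, by direct substitution and using the symmetry condition $\bA^* = \bK\bA\bK$. First I would note the immediate consequences of $\bK$ being a permutation of the stated block form: $\bK^2 = \bI_{2M}$ and $\bK = \bK^T = \bK^H$. Consequently the relation $\bA^* = \bK\bA\bK$ is equivalent to $\bA\bK = \bK\bA^*$ and to $\bK\bA = \bA^*\bK$, which are the identities I plan to use to commute $\bK$ past $\bA$ at the cost of a complex conjugation.

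Next I would plug the definition of $\bX$ into $\bA\bX\bA$ and split it into two terms:
\[
\bA\bX\bA = \tfrac{1}{2}\bA\bA^g\bA + \tfrac{1}{2}\bA\bK(\bA^g)^*\bK\bA.
\]
The first term equals $\tfrac{1}{2}\bA$ directly from the hypothesis that $\bA^g$ is a generalized inverse. For the second term the plan is to push both $\bK$'s into the middle: using $\bA\bK = \bK\bA^*$ on the left and $\bK\bA = \bA^*\bK$ on the right yields
\[
\bA\bK(\bA^g)^*\bK\bA = \bK\bA^*(\bA^g)^*\bA^*\bK = \bK\bigl(\bA\bA^g\bA\bigr)^*\bK = \bK\bA^*\bK.
\]
Finally I would invoke the symmetry hypothesis once more, in the form $\bK\bA^*\bK = \bA$, to conclude that the second term is also $\tfrac{1}{2}\bA$, and therefore $\bA\bX\bA = \bA$ as required.

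There is essentially no serious obstacle here; the proof is a short algebraic manipulation. The only point that warrants care is keeping track of where the complex conjugates land when commuting $\bK$ past $\bA$ and $\bA^g$, and recognising that the bracketed expression $\bA^*(\bA^g)^*\bA^*$ is exactly the conjugate of the defining generalized-inverse identity. I would not bother stating additional properties of $\bX$ (such as an augmented-form interpretation) in the proof itself, since the lemma as phrased asks only for the generalized-inverse property; the symmetrization's role in producing directions of descent of the required augmented form $[\by^T,\by^H]^T$ is a subsequent application, not part of the lemma.
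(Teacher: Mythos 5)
Your proof is correct and follows essentially the same route as the paper's: both verify $\bA\bX\bA=\bA$ by direct substitution, reduce the cross term to $\bK(\bA\bA^g\bA)^*\bK$ via the relation $\bA^*=\bK\bA\bK$ together with $\bK^2=\bI$, and conclude. The only cosmetic difference is that you commute $\bK$ past $\bA$ using $\bA\bK=\bK\bA^*$, whereas the paper inserts $\bK\bK=\bI$ on both ends and conjugates the whole middle block; these are the same manipulation.
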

	\begin{proof}
	The proof is a simple verification:
	\[
	\begin{array}{rl}
	\bA\bX\bA &= \frac{1}{2} (\bA\bA^g\bA) + \frac{1}{2} (\bA\bK (\bA^g)^* \bK\bA)\\
	& = \frac{1}{2}\bA + \frac{1}{2} (\bK\bK\bA\bK (\bA^g)^* \bK\bA\bK\bK) \\ 
	& = \frac{1}{2}\bA + \frac{1}{2} (\bK (\bA \bA^g \bA)^*\bK) = \bA,
	\end{array}
	\]
	where we used $\bK\bK = \bI$.
\end{proof}
It is trivial to verify that \mbox{$\bK\bJ^H\bJ\bK = (\bJ^H\bJ)^*$} and  \mbox{$\bK\bJ^H\brh = (\bJ^H\brh)^*$}. This allows us to show that for any generalized inverse solution \mbox{$\bdeltat_1 = (\bJ^H\bJ)^g\bJ^H\brh$},
	\begin{equation}
	\label{eq:deltat1_to_deltat}
\bdeltat = \bX \bJ^H\brh= \frac{1}{2}( \bdeltat_1 +  \bK \bdeltat_1^*) 
\end{equation}
is a solution to the system of equation with the correct format. 

Based on this discussion, it is always possible to transform any solution to the correct (augmented) format. This gives us more flexibility in choosing our solver. For the matrix $\bJ^H\bJ$ we will show that using a block LDU decomposition will lead to solving a system of equations which involves a convergent matrix which has a stable and fast iterative solution \cite{meyer1977}.

In order to simplify the notation we introduce the following definitions: \mbox{$\bgt \equiv \bg^* \odot \bg$}, \mbox{$\bD \equiv \diag(\bH\bgt)$}, \mbox{$\bb \equiv \bD^{-1/2}\bE\bg$} and \mbox{$\bC \equiv \bD^{-1/2} \bG^* \bH \bG^* \bD^{-1/2}$}. With these definitions the block-LDU decomposition of  \mbox{$\bJ^H\bJ = \bL \bDt \bL^H$} is given by
\begin{equation*}
\begin{array}{cc}
\bL = \begin{bmatrix}
\bD^{1/2} & \zeros \\
\bD^{1/2}\bC & \bD^{1/2}
\end{bmatrix} & \text{and } \bDt = \begin{bmatrix}
\bI & \zeros \\
\zeros & \bI -  \bC\bC^H
\end{bmatrix}.
\end{array}
\end{equation*}
Applying forward-backward substitution we find the following expression for $\bdeltat_1$ in \eqref{eq:deltat1_to_deltat}: 
\begin{equation}
	\label{eq:deltat1}
\bdeltat_1 = \begin{bmatrix}
	\bD^{-1/2}(\bb - \bC^H \bdeltat_{1,2}) \\
	\bD^{-1/2}\bdeltat_{1,2}
\end{bmatrix}, 
\end{equation}
where $\bdelta_{1,2}$ is the solution to the following system of equations
\begin{equation}
	\label{eq:BLDU_equation}
(\bI - \bC\bC^H) \bdeltat_{1,2} = \bb^* - \bC\bb.
\end{equation}
Remembering that $\bJ^H\bJ$ is rank-deficient by one and the fact that $\bL$ is positive definite, we know that \mbox{$\bI - \bC\bC^H$} is also rank deficient by one and positive semidefinite. We already discussed that $\bz$ given by \eqref{eq:null_J} is a basis for the null space of $\bJ^H\bJ$. This means that \mbox{$\bL^H\bz$} is a basis for the null space of $\bDt$ and hence 
\[
\bzt = \frac{1}{\sqrt{\bg^H\bD\bg}}\bD^{1/2}\bg^*
\]
is a unit-norm basis for the null space of \mbox{$\bI - \bC\bC^H$}. Because the system of equations in \eqref{eq:DoD_system} is consistent, so is \eqref{eq:BLDU_equation} and
\[
\begin{array}{rl}
\bdeltat_{1,2} &= (\bI - \bC\bC^H)^\dagger (\bb^* - \bC\bb) \\
& =  (\bI - \bC\bC^H + \bz\bz^H)^{-1} (\bb^* - \bC\bb).
\end{array}
\]
Note that \mbox{$\bI - \bC\bC^H + \bzt\bzt^H$} is positive definite with \mbox{$\lambda_{\max} = 1$}, which means that the spectral radius of \mbox{$\rho(\bz\bz^H - \bC\bC^H) < 1$} and hence this matrix is a convergent matrix. For convergent matrices we know \cite{meyer1977} that
\begin{equation}
\label{eq:iterative_update}
\bdeltat_{1,2}^{(j+1)} = \bb^* - \bC\bb - (\bz\bz^H - \bC\bC^H) \bdeltat_{1,2}^{(j)}
\end{equation}
will converge to a solution of \eqref{eq:BLDU_equation}.

To summarize, in order to find a solution to \eqref{eq:DoD_system}, first we need to calculate $\bD$, $\bC$, $\bzt$ and then use \eqref{eq:iterative_update}, \eqref{eq:deltat1} and \eqref{eq:deltat1_to_deltat}. The complexity of these operations are $P$ divisions and $O(P^2)$ operations needed for the matrix vector multiplications. This means that we will benefit from the fast convergence of the GN algorithm, while having the same complexity as slower converging alternating algorithms.

The only unsolved issue is the optimal step-size $\mu^{(i)}$ which, as we show in Appendix \ref{app:stepzie}, requires solving for the roots of a third order polynomial with real coefficients, for which closed-form solutions exists.

\section{Extension to Direction Dependent Calibration}
\label{sec:DDC}
Now that several key ideas have been derived for the direction independent scenario, we extend to the direction-dependent case. Again we use the least squares cost function to find an estimate for the gains
\[
\bthetah = \arg \min_{\btheta}  \| \br - \sum_{q=1}^Q \br_q(\btheta_q) \|_2^2
\]
where 
\[
\br_q(\btheta_q) = \begin{bmatrix}
\vect(\bG_q \bSigma_{q,1,1}\bG_q)\\
\vdots \\
\vect(\bG_q \bSigma_{q,T,K}\bG_q)
\end{bmatrix}.
\]
Using this cost function we discuss two different approaches for solving this problem. The first one is based on the repeated application of the method developed for the direction independent scenario which we will denote as ``Block Gauss-Newton" (BGN) and the second approach which is based on the Conjugate Gradient (CG) method.

\subsection{Block Gauss-Newton}
We can extend the matrices $\bH$ and $\bE$ defined by \eqref{eq:di_H} and \eqref{eq:di_E} to the direction dependent case as 
\begin{align}
\bE_{q} & = \sum_t \sum_k \bSigma_{q,t,k}^T \odot \bRh_{t,k} \\
\bH_{{q_{1}},{q_{2}}} & = \sum_t \sum_k \bSigma_{{q_{1}},t,k}^T \odot \bSigma_{{q_{2}},t,k}
\end{align}
where $q$, $q_1$ and $q_2$ take values $1, \dots, Q$,  $\bH_{{q_{1}},{q_{2}}}$ is Hermitian and \mbox{$\bH_{{q_{1}},{q_{2}}} = \bH_{{q_{2}},{q_{1}}}^T$}. The use of these matrices is beneficial only if $KT/Q > 1$. If this condition does not hold, storing $\bRh_{t,k}$ and $ \bSigma_{q,t,k}$ will be more efficient than generating $\bE_q$ and $\bH_{q_1,q_2}$. We assume that this condition holds for a practical calibration scenario.

Using $\bE_q$ and $\bH_{{q_{1}},{q_{2}}}$, the gradient for the $q$th direction can be written as
\begin{equation}
\label{eq:grad_ddc}
\bgamma_q = \bJ_q^H(\brh - \br(\btheta)) = \begin{bmatrix}
\bE_q \bg_q - \sum_{q_2} \bG_{q_2} \bH_{q,q_2} (\bg_{q} \odot \bg^*_{q_2}) \\
\bE_q^T \bg^*_q - \sum_{q_2} \bG^*_{q_2} \bH_{q,q_2}^T (\bg^*_{q} \odot \bg_{q_2})
\end{bmatrix}.
\end{equation}
If we change the summation above such that $q_2 \neq q$, then the direction of descent can be found by applying the method discussed in the previous section separately for each direction in a parallelized fashion. Because the updates are done separately for each direction, we are not limited to a single iteration and we can update each solution several time before updating the gradients. This approach is very similar to the ADMM \cite{boyd2011distributed}. However, the calibration problem is not convex and the convergence of BGN is local.

\subsection{Conjugate Gradient}
For the next generation radio telescopes, such as the SKA, the number of stations and directions will increase dramatically. In these cases where the problem becomes very large even the modified Gauss-Newton method used in previous section could become prohibitive. Simple classical methods such as Conjugate Gradient (CG) become attractive in these scenarios. The CG has very nice convergence properties if the exact optimal step-size is used \cite{Frandsen2004}. If $KT/ Q > 1$ we can use the matrices $\bE_q$ and $\bH_{{q_{1}},{q_{2}}}$ to find the optimal step-size (see Appendix \ref{app:stepzie}). This, in combination with the Polak-Ribi\`ere method \cite{Frandsen2004} will produce a relatively fast converging CG method for the direction-dependent calibration.

For this algorithm we use the previous direction of descent and the gradient given by \eqref{eq:grad_ddc} to a new direction of descent. The updates for the direction of descent are
\[
\bdelta^{(j)} = \bgamma^{(j)} + \lambda \bdelta^{(j-1)}
\]
where \mbox{$\bgamma = [\bgamma_1^T, \dots, \bgamma_Q^T]$} and $\lambda$ is given by Polak-Ribi\`ere ratio
\[
\lambda = \frac{\Re\{(\bgamma^{(j)} - \bgamma^{(j-1)})^H\bgamma^{(j)}\}}{{\bgamma^{(j-1)}}^H\bgamma^{(j-1)}},
\]
where $\Re\{.\}$ is the real part of the argument. Using simulations we show that this algorithm is computationally competitive with other methods.

\section{Extension to Polarized Instrumental Gains}
For an array with $P$ receivers with dual polarization (e.g. a cross-dipole antenna) and direction independent gains, under narrow-band assumptions, we have a covariance model of the form
\begin{equation}
\bR = \bM \odot \bGt\bSigmat\bGt^H
\end{equation}
where
\begin{eqnarray*}
\bGt = \begin{bmatrix}
\bG_{x} & \bG_{xy} \\
\bG_{yx} & \bG_{y}
\end{bmatrix}, \text{ and }  \bSigmat = \begin{bmatrix}
\bSigma_{xx} & \bSigma_{xy} \\
\bSigma_{xy}^H & \bSigma_{yy}
\end{bmatrix},
\end{eqnarray*}
are the gain and source covariance matrix respectively.
Let the Jones matrix for a x-y pair be
\[
\begin{bmatrix}
g_{x,p} & g_{xy,p} \\
g_{yx,p} & g_{y,p}
\end{bmatrix},
\]
for $p = 1,\dots,P$. We stack each element into a vector such that 
\[
\bg_x = \begin{bmatrix}
g_{x,1}^T&
g_{x,2}^T&
\dots &
g_{x,P}^T
\end{bmatrix}^T.
\]
We do the same to create $\bg_{xy}$, $\bg_{yx}$ and $\bg_{y}$. With these definitions $\bG_{x} = \diag(\bg_{x})$, $\bG_{xy} = \diag(\bg_{xy})$ and so on.
Similar to \cite{kazemi01102013}, we can extend this model to direction-dependent case using
\begin{equation}
\bR = \bM \odot \sum_{q=1}^{Q} \bR_q= \bM \odot\sum_{q=1}^{Q}
\bGt_q\bSigmat_q\bGt_q^H.
\end{equation}
There are some simplifications that can be made to this model if the sources used in the calibration are unpolarized. This is discussed in the next section.

\subsection{Unpolarized Sources and Unitary Ambigouity}
For unpolarized sources $\bSigma_{xy} = \zeros$ and $\bSigma_{x} = \bSigma_{y}=\bSigma$. Hence,
\[
\bSigmat_q = \bI_2 \otimes \bSigma_q
\]
and without loss of generality we can find another gain  
\[
\bGt_q' = \bGt_q (\bU \otimes \bI_P)
\]
for any $2\times2$ unitary matrix $\bU$, such that
\[
\bGt_q'\bSigmat_q(\bGt_q')^H= \bGt_q\bSigmat_q\bGt_q^H.
\]
This is the so called unitary ambiguity problem which happens when all the sources in the cluster $q$ are unpolarized. There are several ways to address this ambiguity, fore example \cite{yatawatta2013radio} constraints the solutions on a manifold. In this section, we give an expression for the null space of the Jacobian as the result of this ambiguity.  

Because $\bU$ is unitary of size $2 \times 2$, it can be fully described by 4 independent real parameters (4 complex parameters with 4 constraints from $\bU^H\bU = \bI$). This means that 4 additional constraints per direction is needed to make the gains unique. This also means that the Jacobian,
\[
\bJ = \frac{\partial \vect(\bR)}{\partial \btheta^T},
\]
where $\btheta = [\btheta_1^T, \dots, \btheta_Q^T]^T$ with
\[
\btheta_q = [
\bg_{x,q}^T, \bg_{yx,q}^T, \bg_{xy,q}^T, \bg_{y,q}^T, \bg^H_{x,q}, \bg^H_{yx,q}, \bg^H_{xy,q}, \bg^H_{y,q}
]^T,
\] 
is rank deficient by at least $4Q$. As a result, there exits a $8PQ \times 4Q$ matrix $\bZ$ such that $\bJ\bZ = \zeros$. By some algebra we can show that if the receivers are not linearly polarized and/or there are no defective receivers, a basis for the null space of $\bJ$ can be constructed using the following relations: 
\[
\bZ =\begin{bmatrix}
\bZ_1 &  \zeros& \dots & \zeros\\
\zeros & \bZ_2 & \ddots & \vdots\\
\vdots &\ddots & \ddots& \zeros\\
\zeros& \dots &\zeros & \bZ_Q
\end{bmatrix}
\] 
with
\begin{equation}
\bZ_q = \begin{bmatrix}
\bg_{x,q} &        \zeros &        \zeros &   \bg_{xy,q}\\
\bg_{yx,q} &        \zeros &        \zeros &    \bg_{y,q}\\
\zeros &    \bg_{xy,q} &      \bg_{x,q}&      \zeros \\
\zeros &     \bg_{y,q} &    \bg_{yx,q} &      \zeros \\
-\bg^*_{x,q} &        \zeros & -\bg^*_{xy,q} &      \zeros \\
-\bg^*_{yx,q} &        \zeros &  -\bg^*_{y,q} &      \zeros \\
\zeros & -\bg^*_{xy,q} &        \zeros &  -\bg^*_{x,q} \\
\zeros &  -\bg^*_{y,q} &        \zeros & -\bg^*_{xy,q} \\
\end{bmatrix}.
\end{equation}
It is important to note that the columns of this basis are not necessary orthogonal.  For the rest of this paper we assume to have a sky-model consisting of unpolarized compact and/or point sources.

\section{Direction Independent Polarized Calibration with Unpolarized Sky-Model}
\label{sec:DIPOLCALUNM}
In this section we show how a polarized calibration problem with unpolarized sky-model can be transformed into an unpolarized direction-dependent gain calibration similar to Sec.~\ref{sec:DDC} with twice the number of directions.

Using the data model in previous section we have
\begin{align}
\bR &= \bM \odot \bGt (\bI_2\otimes \bSigma) \bGt^H \notag\\
	&= \bM \odot \notag \\
	&\begin{bmatrix}
	\bG_{x}\bSigma\bG_{x}^H + \bG_{xy}\bSigma\bG_{xy}^H & \bG_{x}\bSigma\bG_{yx}^H + \bG_{xy}\bSigma\bG_{y}^H \\
	\bG_{yx}\bSigma\bG_{x}^H + \bG_{y}\bSigma\bG_{xy}^H & \bG_{yx}\bSigma\bG_{yx}^H + \bG_{y}\bSigma\bG_{y}^H
	\end{bmatrix} \notag\\
	& = \bM \odot \left[\bG_1 (\ones_2\ones_2^T \otimes \bSigma)\bG_1^H + \bG_2 (\ones_2\ones_2^T \otimes \bSigma)\bG_2^H\right]\notag
\end{align}
where
\begin{eqnarray*}
\bG_1 = \begin{bmatrix}
\bG_{x} & \zeros \\
\zeros & \bG_{yx}
\end{bmatrix} & \text{and } \bG_2 = \begin{bmatrix}
\bG_{xy} & \zeros \\
\zeros & \bG_{y}
\end{bmatrix}.
\end{eqnarray*}
We see that the polarized model is now transformed into an unpolarized model with two ``directions" having the same sky model $(\ones_2\ones_2^T \otimes \bSigma)$. Note, that this change of representation does not remove the unitary ambiguity and while a direction-dependent unpolarized calibration with two directions has only a rank deficiency of 2, this problem still has a rank deficiency of 4. In fact the basis for the null space of the Jacobian is given by
\begin{equation}
\bZ = \begin{bmatrix}
\bg_{1} &        \zeros &         \bg_{2} &   \bg_{2}\\
-\bg^*_{1} &        \zeros & -\bg^*_{2} &      \bg^*_{2} \\
\zeros &    \bg_{2} &      \bg_{1}&     \bg_{1} \\
\zeros & -\bg^*_{2} &        -\bg^*_{1} &  \bg^*_{1} \\
\end{bmatrix},
\end{equation}
where $\bG_1 = \diag(\bg_{1})$ and $\bG_2 = \diag(\bg_{2})$.

As we argued before, because we are missing two additional dimensions in the Jacobian, the transformation of the polarized problem to an unpolarized one leads to an unidentified direction-dependent problem if each direction is solved independently (even when the polarized problem is identifiable). Additionally $\bg_{xy}$ and $\bg_{yx}$ may have zero elements, which in turn could make $\bg_1$ and $\bg_2$ unidentifiable for the methods in Sec.~\ref{sec:DDC}. For these reasons, and because the algorithm used for DI polarized calibration is the basis for direction-dependent case, we will propose a modified algorithm specially tailored for this model in the next section.

\subsection{Levenberg-Marquardt for DI Polarized Calibration}
The Levenberg-Marquardt alrogirm (LMA), is a popular extension to the Gauss-Newton algorithm where the length of the direction of descent, $\bdelta$, is regularized and the system of equations in \eqref{eq:DoD_system} is replaced by
\begin{equation}
\label{eq:LMupdates}
(\bJ^H\bJ + \lambda \bI) \bdelta = \bJ^H[\brh - \br(\btheta)]
\end{equation}
where $\lambda > 0$ is updated during each iteration (see \cite{Frandsen2004} for details on this update). The full treatment of the LMA is beyond the scope of this paper and we only focus on adaptation of our unpolarized direction-dependent approach to this problem. Because the adding the term $\lambda \bI$ already guarantees the positive definiteness of the matrices involved for the LMA, we do not need to use the $\bZ$ matrix.

Following the same procedure as in Sec.~\ref{sec:DIC} we can define:
\[
\bH = \bM \odot \left(\ones_2\ones_2^T \otimes \sum_t \sum_k \bSigma_{t,k}^T \odot \bSigma_{t,k} \right) 
\]
and
\[
\bE = \bM \odot \sum_t \sum_k (\ones_2\ones_2^T \otimes \bSigma_{t,k})^T \odot \bRh_{t,k}. 
\]
Similar to the unpolarized scenario, we express $\bJ^H\bJ$ using $\bH$ as
\[
\bJ^H\bJ = \begin{bmatrix}
	\bJ_{1}^H\bJ_{1} & \bJ_{1}^H\bJ_{2} \\
	\bJ_{2}^H\bJ_{1} & \bJ_{2}^H\bJ_{2}
\end{bmatrix}
\]
where
\[
\bJ_{q}^H\bJ_{p} = \begin{bmatrix}
\bD_{q,p} & \bG_p \bH \bG_q \\
\bG_p^* \bH \bG_q^* & \bD_{q,p}^*
\end{bmatrix},
\]
with $\bgt_{q,p} = \bg_q \odot \bg^*_p$, $\bD_{q,p} = \diag(\bH \bgt_{q,p})$ and $p,q \in \{1,2\}$.  The expression for the gradient, $\bgamma = [\bgamma_1^T, \bgamma_2^T]^T$, is the same as before with $\bgamma_q$ given by \eqref{eq:grad_ddc}
\[
\bgamma_q = \begin{bmatrix}
\bE_{q}\bg_{q}- \sum_{p=1}^2 \bG_p\bH \bgt_{q,p} \\
\bE_{q}^*\bg^*_{q}- \sum_{p=1}^2 \bG_p^*\bH \bgt^*_{q,p} 
\end{bmatrix}.
\]
When solving the LM update with $\lambda>0$ the only thing that changes is the definition of $\bD_{q,q}$ which is replaced by $\bD_{q,q} = \diag(\bH\bgt_{q,q} + \lambda \ones)$. Thus, the BGN method of section Sec.~\ref{sec:DDC} can easily be replaced by a block LM method without any increase of complexity. In fact, because adding $\lambda$ guarantees positive definiteness we don't need to calculate $\bzt$. However, if the convergence of this method is too slow, and/or calculating the exact inverse of a $P\times P$ matrix is affordable, it is better not to ignore the cross terms $\bJ_{1}^H\bJ_{2}$. In Appendix~\ref{app:BLDUDIPCUSK} we give a full description of finding the exact solution to \eqref{eq:LMupdates} using a Block-LDU decomposition. Having this exact method is also beneficial for the unpolarized direction-dependent method, as clusters could be updated in groups of 2 which should improve the convergence rate.

In conclusion, we have shown that a direction independent polarized calibration with unpolarized sky model can be achieved by solving an unpolarized problem with 2 directions having the same sky model. The only change needed is replacing the GN method by a LM method which can be recommended even for unpolarized problems. In similar way, we can transform a polarized direction-dependent algorithm  to an unpolarized calibration one with $2Q$ direction and shared sky-models. The CG method in Sec.~\ref{sec:DDC} is directly applicable without any modifications. In the next section, we show the convergence of these algorithms for direction independent case using a simulation.
 
\section{Simulations}
\subsection{Direction Independent Calibration}
In this section we use simulations to evaluate the performance of the proposed direction independent calibration technique. We simulate sample covariance data (visibilities) using the array configuration of the LOFAR radio telescope consisting only of the Dutch stations \cite{refIdLOFAR} with $P=62$. For the sky model we use the North Celestial Pole (NCP)\footnote{We would like to thank Sarod Yatawatta for this sky model.}. We use 5000 strongest component (point sources) in this field to generate both the data and construct the predicted sky model (i.e. $\bSigma_{t,k}$). We divide a typical LOFAR channel with 195.3 kHz bandwidth into $K=3$ sub channels of $\approx 65$kHz around the central frequency of 150 MHz. For each channel we generate $T=600$ snapshots, each with an integration time of 1 second, which translates into $N = 2\times 65 \times 10^3$ samples with Nyquist sampling. We repeat this for a total of 9 observations which are separated by 1 hour from each other. This is done in order to have enough rotation of the Earth to synthesize an image. Fig.~\ref{fig:ncpimg2} show an MVDR dirty image of the simulated field using 10 snapshot from each hour. Table~\ref{table:1} summarizes the computation on an Intel 7i-6700K CPU with 16GB of RAM. As we see, generating the predicted model $\bSigma_{t,k}$ is the most expensive part of the problem which cannot be avoided and is common among all currently available calibration models which use a sky model. By using $\bH$ and $\bE$ for the direction independent calibration we reduce the storage during the calibration by a factor of $TK = 1800$ and as is shown in fig.~\ref{fig:conv_di} and Table~\ref{table:1}, the algorithm converges very fast both in number of iterations and in computing time.

\begin{table*}
	\caption{Computation time DIC}
	\label{table:1}
	\centering
	\begin{tabular}{|l|c|c|c|}
		\hline
		&Generating $\bSigma_{t,k}$ & Calculating $\bH$ and $\bE$ & optimization \\
		\hline
		Unpolarized & 25s & 0.1s & 0.006s \\ 
		\hline
		Polarized CG &  &  & 0.064s\\ 
		\hline
	Polarized GN	 &  &  & 0.134s\\ 
		\hline
	Polarized FULL block-LDU LMA	 &  &  & 0.261s\\ 
		\hline
	\end{tabular}
\end{table*}

\begin{figure}
	\centering
	\includegraphics[width=0.45\textwidth]{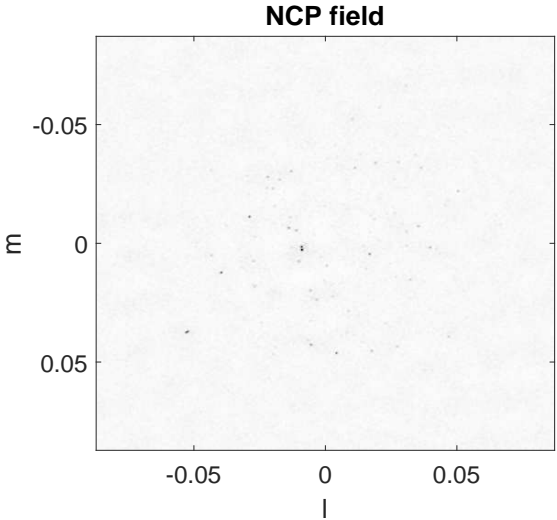}
	\caption{The MVDR dirty image of the simulated NCP field.}
	\label{fig:ncpimg2}
\end{figure}

\begin{figure}
	\centering
	\includegraphics[width=0.45\textwidth]{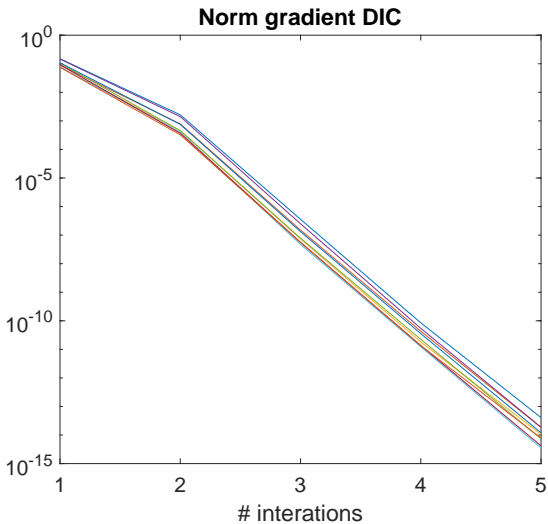}
	\caption{Convergence of the direction independent calibration for 9 calibration runs}
	\label{fig:conv_di}
\end{figure}

\begin{figure}
	\centering
	\includegraphics[width=0.45\textwidth]{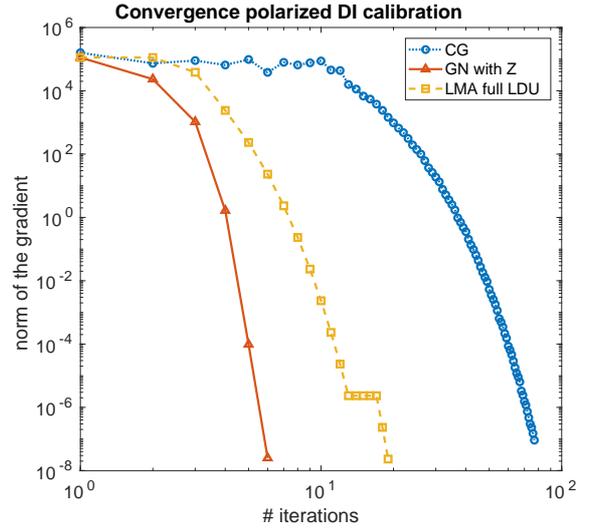}
	\caption{Convergence of the direction independent polarized calibration}
	\label{fig:conv_dipc}
\end{figure}

\begin{figure}
	\centering
	\includegraphics[width=0.45\textwidth]{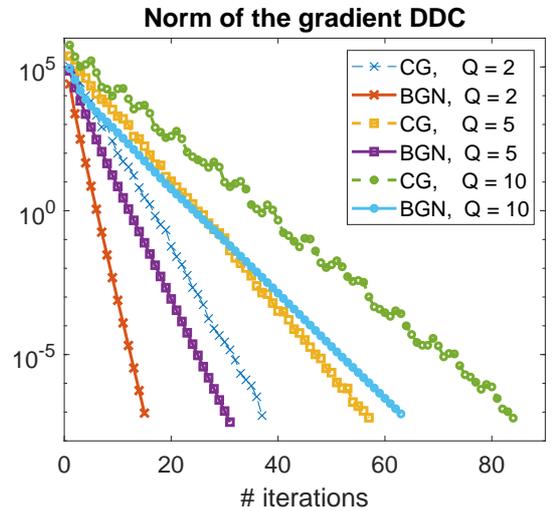}
	\caption{Convergence of the direction-dependent calibration}
	\label{fig:conv_ddc}
\end{figure}
\subsection{Direction Independent Polarized Calibration}
Because the assumed sky-model used in previous section consist only of unpolarized point sources, we can use it to simulate the polarization due to antenna gains similar to Sec.~\ref{sec:DIPOLCALUNM}. We will use a single 10 minute data set for this section.

Fig.~\ref{fig:conv_dipc} shows the convergence of three methods. The first method is direct use of the CG method from Sec.~\ref{sec:DDC} for $Q=2$, followed by the direct use of the Gauss-Newton method and LMA with full LDU decomposition as described in Appendix~\ref{app:BLDUDIPCUSK}. The gains are generated using Proper Gaussian complex random variables with unit mean and variance.

Table~\ref{table:1} shows the needed computation time for each of the methods. The steps in CG method are so cheap for $Q=2$ that the overall performance is very promising even though the convergence per iteration is slower than the other two methods. The GN method outperforms LMA because of the total number of iteration it needs in this simulation to converge. However, if we look at the computation time per iteration we see that they are similar. Because of the built in regularization of the LMA, in general we expect it to be more robust for polarized calibration.

\subsection{Direction Dependent Unpolarized Calibration}
For the direction dependent calibration we use again $P=62$ receivers with 20 randomly generated sources per direction. We use a single snapshot with $N=10k$ samples, which is moderate for radio astronomical observations. We then use the algorithm discussed in Sec.~\ref{sec:DDC} to find the gains. 

Fig.~\ref{fig:conv_ddc} shows the convergence speed of both algorithms based on the gradient. As expected the Block-Gauss-Newton (BGN) converges faster than the CG method. However, based on several repetition of our simulations, we have observed that the total computation time of the CG method, especially for larger $Q$, is much lower. For example in the case where $Q=10$ BGN method takes $\approx 7$s while CG take $\approx 0.3$s.  This fast convergence of CG is mainly because of the exact step-size calculations.
\section{Conclusion}
In this paper we have proposed new calibration algorithms for both direction dependent and independent calibration for radio interferometric array. We have shown that the optimal step-size can be calculated in a closed form fashion and does not require (expensive) line-search methods or approximations. All of the proposed algorithms converge reasonably fast and have very small storage requirements.

We also showed that if the sky model consist of un-polarized sources, the polarized direction independent gain calibration can be formulated as an un-polarized direction-dependent calibration with two directions, sharing the same sky-model. We then extended the calibration algorithms to accommodate this scenario.

There are several issues that are not addressed in this paper, including ionospheric effects and frequency dependency of the gains. The latter places additional restriction on the gains and hence will only improve the proposed algorithm without much anticipated modifications. These extensions are addressed in future works.
\bibliographystyle{IEEEtran}
\bibliography{biblio3}
\appendices
\section{Step Size}
\label{app:stepzie}
The least square costs function in Sec.\ \ref{sec:DDC} can be written as  
\[
\begin{array}{rl}
f(\btheta) &= \brh^H\brh - 2 \sum_q \bg_q^H \bE_q \bg_q + \sum_q \bgt_q^H \bH_{q,q}\bgt_q^H \\
& + 2 \sum_{q_1} \sum_{q2 > q_1} \bgt_{q_1,q_2}^H \bH_{{q_{1}},{q_{2}}} \bgt_{q_1,q_2}
\end{array}
\]
where $\bgt_q = \bg_{q} \odot \bg_q^*$ and $\bgt_{q_1,q_2} = \bg_{q_1} \odot \bg_{q_2}^*$. Let the direction of descent be $\bdelta$ and its sub-vector for $q$th direction be $\bdelta_q$, we are interested in
\[
\mu_{\text{opt}} = \arg \min_{\mu} f(\btheta + \mu \bdelta).
\]

Because of the quadratic relations in $\bgt$ we know that the cost function is a fourth order polynomial in $\mu$ with real coefficients, which means that finding optimal $\mu$ requires solving the roots of a third order polynomial. The gradient of the cost function with respect to $\mu$ is given by
\[
f'(\mu) = 4 c_1 \mu^3 + 3 c_2 \mu^2 + 2 c_3 \mu + c_1
\]
where $c_i  = a_i + b_i$ with
\[
\begin{array}{rl}
a_4 &=   \sum_q 2 \byt_q^H  \bH_{q,q} \bgt_q - 4  \Re\{\bdelta_q^H \bE_q \bg_q\},\\
a_3 &=  \sum_q 2 \bgt_q^H \bH_{q,q} \bxt_q + \byt_q^H \bH_{q,q} \byt_q - 2 \Re\{\bdelta_q^H  \bE_q  \bdelta_q\}, \\
a_2 &=  \sum_q 2 \byt_q^H \bH_{q,q}  \bxt_q,\\
a_1 &=  \sum_q \bxt_q^H \bH_{q,q} \bxt,\\

b_4 &=  \sum_q \sum_{p>q} 4 \Re\{\byt_{q,p}^H \bH_{q,p} \bgt_{q,p} \},\\
b_3 &=  \sum_q \sum_{p>q} 2 \Re\{\byt_{q,p}^H \bH_{q,p}  \byt_{q,p} + 2\bgt_{q,p}^H\bH_{q,p}  \bxt_{q,p} \},\\
b_2 &=  \sum_q \sum_{p>q} 4 \Re\{\byt_{q,p}^H \bH_{q,p} \bxt{q,p}\},\\
b_1 &= \sum_q \sum_{p>q}  2 \Re\{\bxt_{q,p}^H \bH_{q,p} \bxt_{q,p}\},\\

\byt_{q,p} &=  \bg_q \odot \bdelta_p^* + \bdelta_q \odot \bg_p^*,\\
\bxt_{q,p} &= \bdelta_q \odot \bdelta_p^*,
\end{array}
\]
$\byt_q=\byt_{q,q}$ and $\bxt_q=\bxt_{q,q}$.

\section{Block LDU for DI Polarized Calibration with Unpolarized Sky-Model }
\label{app:BLDUDIPCUSK}
In order to solve \eqref{eq:LMupdates} directly using a block-LDU decomposition we define 
\[
\bL_q^{-1} = \begin{bmatrix}
\bD_{q,q}^{-1/2} & \zeros \\
-\bX_q^{-1} \bC_q \bD_{q,q}^{-1/2}  & \bX_q^{-1}\bD_{q,q}^{-1/2} 
\end{bmatrix}
\]
where $\bD_{q,q} = \diag(\bH(\bg_q \odot \bg_q^*) + \lambda \ones)$, $\bX_q\bX_q^H = \bI - \bC_q\bC_q^H$ is a Cholesky decomposition and $\bC_q = \bD_{q,q}^{-1/2}\bG_q^*\bH\bG_q^*\bD_{q,q}^{-1/2}$. Let $\bb_q\equiv \bL_q^{-1} \bgamma_q$ and $\bC_{2,1} \equiv \bL_{2}^{-1}\bJ_2^H\bJ_{1}\bL_1^{-H}$ then the solution $\bdeltat_1$ is given by 
\begin{equation}
\label{eq:deltat1}
\bdeltat_1 = \begin{bmatrix}
\bL_1^{-H} & \zeros \\
\zeros &\bL_2^{-H}
\end{bmatrix}\begin{bmatrix}
\bx_1 \\
\bx_2
\end{bmatrix},
\end{equation}
with $\bx_1 = \bb_1-\bC_{2,1}^H\bx_2$ and $\bx_2$ the solution to
\[
(\bI-\bC_{2,1}\bC_{2,1}^H)\bx_2 = (\bb2 - \bC_{2,1}\bb_1).
\]
For $\lambda >0 $ the solutions $\bx_q$ must be unique and already in the correct augmented format.

Calculating the matrices $\bX_q^{-1}$ is the most computationally expensive part of this direct approach.
 
\end{document}